\documentclass[11pt]{article}%
\usepackage{amsfonts}
\usepackage{amsmath}
\usepackage{amssymb}
\usepackage{graphicx}%
\setcounter{MaxMatrixCols}{30}
\providecommand{\U}[1]{\protect\rule{.1in}{.1in}}
\newtheorem{theorem}{Theorem}

\newenvironment{proof}[1][Proof]{\noindent\textbf{#1.} }{\ \rule{0.5em}{0.5em}}
\begin{document}

\title{On the Disentanglement of Gaussian Quantum States by Symplectic Rotations\\Sur la D\'{e}sintrication des \'{E}tats Quantiques Gaussiens par des Rotations Symplectiques}
\author{Maurice A. de Gosson\thanks{maurice.de.gosson@univie.ac.at}\\Universit\"{a}t Wien\\Fakult\"{a}t f\"{u}r Mathematik (NuHAG)\\Oskar-Morgenstern-Platz 1, 1090 Wien (AUSTRIA)}
\maketitle

\begin{abstract}
We show that every Gaussian mixed quantum state can be disentangled by
conjugation with a unitary operator corresponding to a symplectic rotation via
the metaplectic representation of the symplectic group. The main tools we use
are the Werner--Wolf condition for separability on covariance matrices and the
symplectic covariance of Weyl pseudo-differential operators.

\end{abstract}
\begin{abstract}
Nous montrons que chaque \'{e}tat quantique Gaussien peut-\^{e}tre rendu
s\'{e}parable (= \textquotedblleft d\'{e}sintriqu\'{e}\textquotedblright) par
conjugaison avec un op\'{e}rateur unitaire associ\'{e} via le groupe
m\'{e}taplectique \`{a} une rotation symplectique. Pour cela nous utilsons la
condition de s\'{e}parabilit\'{e} de Werner et Wolf sur la matrice de
covariance ainsi que la covariance symplectique des op\'{e}rateurs
pseudo-diff\'{e}rentiels de Weyl.

\end{abstract}

\section{Introduction}

Gaussian states play an ubiquitous role in quantum information theory and in
quantum optics because they are easy to manufacture in the laboratory, and
have in addition important extremality properties \cite{wolf}. Of particular
interest are the separability and entanglement properties of Gaussian states;
the literature on the topic is immense; two excellent texts whose mathematical
setup is rigorous are \cite{AI,aragy}. It turns out that even if major
advances have been made in the study of the separability of Gaussian quantum
states in recent years (one of the milestones being Werner and Wolf's paper
\cite{ww} about the covariance matrices of bipartite states), the topic is
still largely open. The aim of this Note is to show that every Gaussian state
can be made separable by using a symplectic rotation and of the corresponding
metaplectic operator. (We note that physicists use the terminology
\textquotedblleft passive symplectic transformations\textquotedblright\ in
place of \textquotedblleft symplectic rotation\textquotedblright). This result
can be viewed as closing a problem originally posed in Wolf \textit{et al}.
\cite{Wolf}, who asked which Gaussian states can be entangled by symplectic
rotations. A full answer has recently been given in \cite{Lami} \textit{et
al}. where the Gaussian states that are separable for all symplectic rotations
are characterized. Our result (Theorem \ref{Thm1}) shows that, conversely,
every entangled Gaussian state can be separated (\textquotedblleft
disentangled\textquotedblright) by metaplectic transformations corresponding
to symplectic rotations.

We will use the following notation. Let $\mathbb{R}^{2n}=\mathbb{R}^{2n_{A}%
}\oplus\mathbb{R}^{2n_{B}}$ be the phase space of a bipartite system
($n_{A}\geq1$, $n_{B}\geq1$). We will use the following phase space variable
ordering: $z=(z_{A},z_{B})=z_{A}\oplus z_{B}$ with $z_{A}=(x_{1}%
,p_{1},...,x_{n_{A}},p_{n_{A}})$ and $z_{B}=(x_{n_{A}+1},p_{n_{A}+1},$
$...,x_{n},p_{n})$. We equip the symplectic spaces $\mathbb{R}^{2n_{A}}$ and
$\mathbb{R}^{2n_{B}}$ with their canonical bases. The symplectic structure on
$\mathbb{R}^{2n}$ is then $\sigma(z,z^{\prime})=Jz\cdot z^{\prime}$ with
$J=J_{A}\oplus J_{B}$ where
\[
J_{A}=\bigoplus_{k=1}^{n_{A}}J_{k}\text{ \ , \ }J_{k}=%
\begin{pmatrix}
0 & 1\\
-1 & 0
\end{pmatrix}
\]
and likewise for $J_{B}$. Thus $J_{A}$ (\textit{resp.} $J_{B}$) determines the
symplectic structure on the partial phase space $\mathbb{R}^{2n_{A}}$
(\textit{resp}. $\mathbb{R}^{2n_{B}}$).

\section{Result: Statement and Proof}

Let $\Sigma$ be a real positive definite symmetric $2n\times2n$ matrix (to be
called \textquotedblleft covariance matrix\textquotedblright\ from now on) and
consider the associated normal probability distribution
\begin{equation}
\rho(z)=\frac{1}{(2\pi)^{n}\sqrt{\det\Sigma}}e^{-\frac{1}{2}\Sigma^{-1}z^{2}%
}~. \label{rho}%
\end{equation}
If the covariance matrix satisfies in addition the condition
\begin{equation}
\Sigma+\frac{i\hbar}{2}J\geq0 \label{quant1}%
\end{equation}
($J$ the standard symplectic matrix) then $\rho$ is the Wigner distribution of
a mixed quantum state, identified with its density operator $\widehat{\rho}$.
We notice that property (\ref{quant1}) crucially depends on the numerical
value of $\hbar$ (see \cite{DiPr09,go17}). We will say that $\widehat{\rho}$
is \textquotedblleft$AB$-separable\textquotedblright\ if there exist sequences
of density operators $(\widehat{\rho}_{j}^{A})$ and $(\widehat{\rho}_{j}^{B})$
on $L^{2}(\mathbb{R}^{n_{A}})$ and $L^{2}(\mathbb{R}^{n_{B}})$, respectively
and coefficients $\lambda_{j}\geq0$ summing up to one, such that
\begin{equation}
\widehat{\rho}=\sum_{j}\lambda_{j}\widehat{\rho}_{j}^{A}\otimes\widehat{\rho
}_{j}^{B} \label{AB}%
\end{equation}
where the convergence is for the trace-class norm. The problem of determining
necessary and sufficient conditions for a density operator to be separable is
still very largely open; while there exist necessary conditions, no simple
sufficient condition for separability is known in the general case; for a
recent up to date discussion see Lami \textit{et al}. \cite{Lami}. Werner and
Wolf \cite{ww} have proven that in the Gaussian case $\widehat{\rho}$ is
separable if and only if there exists a $2n_{A}\times2n_{A}$ covariance matrix
$\Sigma_{A}$ and a $2n_{B}\times2n_{B}$ covariance matrix $\Sigma_{B}$ such
that the following conditions hold:
\begin{gather}
\Sigma_{A}+\frac{i\hbar}{2}J_{A}\geq0\text{ }\label{ww1}\\
\text{\ }\Sigma_{B}+\frac{i\hbar}{2}J_{B}\geq0\text{\ }\label{ww1,5}\\
\Sigma\geq\Sigma_{A}\oplus\Sigma_{B}~. \label{ww2}%
\end{gather}

The aim of this Letter is to prove that for every Gaussian density operator
there exists a unitary transform $\widehat{U}$ such that $\widehat{U}%
\widehat{\rho}\widehat{U}^{-1}$ is a separable Gaussian state:

\begin{theorem}
\label{Thm1}Let $\widehat{\rho}$ be a density operator with Gaussian Wigner
distribution (\ref{rho}). There exists a symplectic rotation $U\in U(n)$
($=\operatorname*{Sp}(n)\cap O(2n,\mathbb{R})$) such that $\widehat{U}%
\widehat{\rho}\widehat{U}^{-1}$ is separable where $\widehat{U}\in
\operatorname*{Mp}(n)$ is any of the two metaplectic operators covering $U$.
\end{theorem}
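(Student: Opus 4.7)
My plan is to combine Williamson's normal form with the Bloch--Messiah (Euler) decomposition of $\operatorname*{Sp}(n)$, and then verify the Werner--Wolf criterion for the conjugated covariance matrix. By the symplectic covariance of Weyl pseudo-differential operators, the conjugation $\widehat{U}\widehat{\rho}\widehat{U}^{-1}$ corresponds on the Wigner side to replacing $\Sigma$ by $U\Sigma U^{T}$, so the task reduces to producing $U\in U(n)$ for which the Werner--Wolf conditions (\ref{ww1})--(\ref{ww2}) are satisfied by $U\Sigma U^{T}$.

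First I apply Williamson's theorem to write $\Sigma=S^{T}D_{W}S$ with $S\in\operatorname*{Sp}(n)$ and $D_{W}=\bigoplus_{j=1}^{n}\lambda_{j}I_{2}$. The quantum condition (\ref{quant1}) is equivalent to $\lambda_{j}\geq\hbar/2$ for every $j$, so $D_{W}\geq(\hbar/2)I_{2n}$ and hence
\[
\Sigma\geq\tfrac{\hbar}{2}\,S^{T}S,
\]
the matrix $S^{T}S$ being symmetric, positive definite and (as a product of symplectic matrices) symplectic. Next I invoke the Euler / Bloch--Messiah factorization $S=U_{1}D_{Z}U_{2}$ with $U_{1},U_{2}\in U(n)$ and $D_{Z}=\bigoplus_{j=1}^{n}\operatorname{diag}(e^{r_{j}},e^{-r_{j}})$; taking $U:=U_{2}$, a direct computation gives
\[
U(S^{T}S)U^{T}=D_{Z}^{2}=\bigoplus_{j=1}^{n}\operatorname{diag}(e^{2r_{j}},e^{-2r_{j}}),
\]
which is block-diagonal across modes.

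To close the argument I split this block-diagonal matrix along the $A/B$ bipartition by setting
\[
\Sigma_{A}:=\tfrac{\hbar}{2}\bigoplus_{j=1}^{n_{A}}\operatorname{diag}(e^{2r_{j}},e^{-2r_{j}}),\qquad \Sigma_{B}:=\tfrac{\hbar}{2}\bigoplus_{j=n_{A}+1}^{n}\operatorname{diag}(e^{2r_{j}},e^{-2r_{j}}).
\]
A direct $2\times 2$ check shows that each block $(\hbar/2)\operatorname{diag}(e^{2r},e^{-2r})+(i\hbar/2)J_{k}$ is a rank-one positive semidefinite Hermitian matrix, with eigenvalues $0$ and $\hbar\cosh(2r)$; hence $\Sigma_{A}$ and $\Sigma_{B}$ satisfy (\ref{ww1}) and (\ref{ww1,5}). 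Combined with the inequality $U\Sigma U^{T}\geq(\hbar/2)D_{Z}^{2}=\Sigma_{A}\oplus\Sigma_{B}$ this gives (\ref{ww2}), and Werner--Wolf then identifies $U\Sigma U^{T}$ as the covariance matrix of a separable Gaussian state; metaplectic covariance transports the conclusion back to $\widehat{U}\widehat{\rho}\widehat{U}^{-1}$.

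The main technical input is the Euler decomposition of $\operatorname*{Sp}(n)$: it aligns the squeezing directions of the Williamson symplectic $S$ with the modewise block structure via a single element of the compact subgroup $U(n)$. Williamson alone only controls $\Sigma$ up to a general symplectic transformation, which may freely mix $A$- and $B$-modes, whereas here we are restricted to the subgroup whose metaplectic cover consists precisely of the passive (symplectic rotation) operations we are allowed to apply.
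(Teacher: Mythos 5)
Your proof is correct and is essentially the paper's argument in different packaging: the paper reaches the same key inequality $\tfrac{\hbar}{2}\Delta^{2}\leq U\Sigma U^{T}$ via the ball-inclusion form of the quantum condition plus the polar decomposition $S=PR$ and the diagonalization $P=U^{T}\Delta U$ by a symplectic rotation, which is exactly your Williamson-plus-Euler route (your $U_{2}$ is the paper's $U$ and your $D_{Z}$ is the paper's $\Delta$). From there the mode-wise splitting and the Werner--Wolf verification coincide, down to the choice $\Sigma_{A}=\tfrac{\hbar}{2}\Delta_{A}^{2}$, $\Sigma_{B}=\tfrac{\hbar}{2}\Delta_{B}^{2}$.
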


\begin{proof}
We begin by recalling \cite{FP,Birk} that the quantum condition (\ref{quant1})
is equivalent to the statement:%
\begin{equation}
\text{\emph{There exists} }S\in\operatorname*{Sp}(n)\text{ \emph{such that}
}SB^{2n}(\sqrt{\hbar})\subset\Omega_{\Sigma} \label{quant2}%
\end{equation}
where $\operatorname*{Sp}(n)$ is the symplectic group of the phase space
$\mathbb{R}^{2n}\equiv\mathbb{R}_{x}^{n}\times\mathbb{R}_{p}^{n}$ equipped
with the standard symplectic form
\[
\sigma=dp_{1}\wedge dx_{1}+\cdot\cdot\cdot+dp_{n}\wedge dx_{n}~,
\]
$B^{2n}(\sqrt{\hbar})$ is the phase space ball defined by $|z|\leq\hbar$ and
$\Omega_{\Sigma}$ the covariance ellipsoid of $\widehat{\rho}$:%
\[
\Omega_{\Sigma}=\{z\in\mathbb{R}^{2n}:\tfrac{1}{2}\Sigma^{-1}z^{2}\leq1\}~.
\]
Let $S=PR$ ($P=(S^{T}S)^{1/2}$, $R=(S^{T}S)^{-1/2}S$) be the symplectic polar
decomposition \cite{Birk} of $S\in\operatorname*{Sp}(n)$, that is
$P\in\operatorname*{Sp}(n)$, $P>0$, and
\[
R\in U(n)=\operatorname*{Sp}(n)\cap O(2n,\mathbb{R})~.
\]
We have $SB^{2n}(\sqrt{\hbar})=PB^{2n}(\sqrt{\hbar})$ by rotational symmetry
of the ball $B^{2n}(\sqrt{\hbar})$. There exists a symplectic rotation $U\in
U(n)$ diagonalizing $P$ \cite{Birk}:%
\begin{equation}
P=U^{T}\Delta U \label{factor}%
\end{equation}
where $\Delta\in\operatorname*{Sp}(n)$ is a diagonal matrix whose form will be
discussed in a moment. The inclusion $SB^{2n}(\sqrt{\hbar})\subset
\Omega_{\Sigma}$ in (\ref{quant2}) is thus equivalent to $\Delta B^{2n}%
(\sqrt{\hbar})\subset U(\Omega_{\Sigma})$, that is
\begin{equation}
\Delta B^{2n}(\sqrt{\hbar})\subset\Omega_{\Sigma_{U}} \label{quant3}%
\end{equation}
where $\Sigma_{U}=U\Sigma U^{T}$. This inclusion is equivalent to the matrix
inequality%
\begin{equation}
\frac{\hbar}{2}\Delta^{2}\leq\Sigma_{U} \label{quant4}%
\end{equation}
($A\leq B$ meaning that $B-A$ is positive semidefinite). We next note that
$\Sigma_{U}$ is the covariance matrix of the density operator $\widehat{\rho
}_{U}$ with Wigner distribution $\rho_{U}(z)=\rho(U^{T}z)$ that is
\[
\rho_{U}(z)=\frac{1}{(2\pi)^{n}\sqrt{\det U\Sigma U^{T}}}e^{-\frac{1}{2}%
\Sigma^{-1}U^{T}z\cdot U^{T}z}~.
\]
Recall now the following symplectic covariance property: if $\widehat{A}%
=\mathrm{Op}^{\mathrm{W}}(a)$ is a Weyl operator with symbol $a$ and
$\widehat{S}\in\operatorname*{Mp}(n)$ a metaplectic operator covering
$S\in\operatorname*{Sp}(n)$ then
\begin{equation}
\widehat{S}\mathrm{Op}^{\mathrm{W}}(a)\widehat{S}^{-1}=\mathrm{Op}%
^{\mathrm{W}}(a\circ S^{-1}) \label{sympco}%
\end{equation}
(see for instance \cite{Littlejohn} or \cite{Birk}, Ch.7). Applying this
covariance formula to $\widehat{\rho}=(2\pi\hbar)^{n}\mathrm{Op}^{\mathrm{W}%
}(\rho)$ yields since $U^{T}=U^{-1}$,
\begin{equation}
\widehat{\rho}_{U}=\widehat{U}\widehat{\rho}\widehat{U}^{-1} \label{RU}%
\end{equation}
where $\widehat{U}$ is anyone of the two metaplectic operators $\pm
\widehat{U}$ covering $U$. We claim that $\widehat{\rho}_{U}$ is separable. To
see this, let us come back to the diagonal matrix $\Delta$ appearing in the
factorization $P=U^{T}\Delta U$ (\ref{factor}). Its diagonal elements are the
eigenvalues $\lambda_{1},...,\lambda_{2n}$ of the positive definite symplectic
matrix $P$ and therefore appear in pairs $(\lambda,1/\lambda)$ with
$\lambda>0$ \cite{Arnold,Birk}. In fact, in the $AB$-ordering we are using,
the matrix $\Delta$ has the form $\Delta=\Delta_{A}\oplus\Delta_{B}$ with
\[
\Delta_{A}=\bigoplus_{k=1}^{n_{A}}\Delta_{k}\text{ \ , \ }\Delta_{B}%
=\bigoplus_{k=n_{A}+1}^{n}\Delta_{k}%
\]
and $\Delta_{k}=%
\begin{pmatrix}
\lambda_{k} & 0\\
0 & \lambda_{k}^{-1}%
\end{pmatrix}
$ for $k=1,...,n$. Clearly $\Delta_{A}\in\operatorname*{Sp}(n_{A})$ and
$\Delta_{B}\in\operatorname*{Sp}(n_{B})$. The symmetric matrices
\[
\Sigma_{A}=\frac{\hbar}{2}\Delta_{A}^{2}\text{ \ },\text{ \ }\Sigma_{B}%
=\frac{\hbar}{2}\Delta_{B}^{2}%
\]
trivially satisfy $\Sigma_{A}+\frac{i\hbar}{2}J_{A}\geq0$ and $\Sigma
_{B}+\frac{i\hbar}{2}J_{B}\geq0$. In view of (\ref{quant4}) we have
\[
\Sigma_{A}\oplus\Sigma_{B}\leq\Sigma_{U}%
\]
and the theorem now follows using the Werner--Wolf conditions (\ref{ww1}%
)---(\ref{ww2}).
\end{proof}

\end{document}